\def\ps@headings{%
\def\@oddhead{}%
\def\@evenhead{}%
\def\@oddfoot{\mbox{}\scriptsize\rightmark \hfil \thepage}%
\def\@evenfoot{\scriptsize\thepage \hfil \leftmark\mbox{}}}
\newtheorem{convention}{\bf Convention}
\newtheorem{theorem}{\bf Theorem}
\newtheorem{property}{\bf Property}
\newtheorem{lemma}{\bf Lemma}
\newtheorem{definition}{\bf Definition}
\begin{document}
\title{Coalitional Games for Distributed Collaborative Spectrum Sensing in Cognitive Radio Networks \vspace{-0.4cm} }

\author{\authorblockN{ Walid Saad$^1$, Zhu Han$^2$, M{\'e}rouane Debbah$^3$, Are Hj{\o}rungnes$^1$ and Tamer Ba\c{s}ar$^4$} \authorblockA{\small
$^1$UNIK - University Graduate Center, University of Oslo, Kjeller, Norway, Email: \url{{saad, arehj}@unik.no}\\
$^2$ Electrical and Computer Engineering Department, University of Houston, Houston, USA, Email: \url{zhan2@mail.uh.edu}\\
$^3$ Alcatel-Lucent Chair in Flexible Radio,  SUP{\'E}LEC, Gif-sur-Yvette, France, Email: \url{merouane.debbah@supelec.fr}\\
$^4$ Coordinated Science Laboratory, University of Illinois at Urbana-Champaign, USA, Email: \url{tbasar@control.csl.uiuc.edu}\vspace{-0.95cm}
 }%
   \thanks{This work was done during the stay of Walid Saad at the Coordinated Science Laboratory at the University Of Illinois at Urbana - Champaign and is supported by the Research Council of Norway
    through the projects 183311/S10, 176773/S10 and 18778/V11.}}
\date{}
\maketitle

\begin{abstract}
Collaborative spectrum sensing among secondary users~(SUs) in cognitive networks is shown to yield a significant performance improvement. However, there exists an inherent trade off between the gains in terms of probability of detection of the primary user (PU)  and the costs in terms of false alarm probability. In this paper, we study the impact of this trade off on the topology and the dynamics of a network of SUs seeking to reduce the interference on the PU through collaborative sensing. Moreover, while existing literature mainly focused on centralized solutions for collaborative sensing, we propose distributed collaboration strategies through game theory. We model the problem as a non-transferable coalitional game, and  propose a distributed algorithm for coalition formation through simple merge and split rules. Through the proposed algorithm, SUs can autonomously collaborate and self-organize into disjoint independent coalitions, while maximizing their detection probability taking into account the cooperation costs (in terms of false alarm). We study the stability of the resulting network structure, and show that a maximum number of SUs per formed coalition exists for the proposed utility model. 
Simulation results show that the proposed algorithm allows a reduction of up to $86.6\%$ of the average missing probability per SU (probability of missing the detection of the PU) relative to the non-cooperative case, while maintaining a certain false alarm level. In addition, through simulations, we compare the performance of the proposed distributed solution with respect to an optimal centralized solution that minimizes the average missing probability per SU.  Finally, the results also show how the proposed algorithm autonomously adapts the network topology to environmental changes such as mobility.
\end{abstract}

\section{Introduction}
In recent years, there has been an increasing growth in wireless services, yielding a huge demand on the radio spectrum. However, the spectrum resources are scarce and most of them have been already licensed to existing operators. Recent studies showed that the actual licensed spectrum remains unoccupied for large periods of time \cite{FCC}. In order to efficiently exploit these spectrum holes, \emph{cognitive radio} (CR) has been proposed \cite{CR00}. By monitoring and adapting to the environment, CRs (secondary users) can share the spectrum with the licensed users (primary users), operating whenever the primary user (PU) is not using the spectrum. Implementing such flexible CRs faces several challenges \cite{CR02}. For instance, CRs must constantly sense the spectrum in order to detect the presence of the PU and use the spectrum holes without causing harmful interference to the PU. Hence, efficient spectrum sensing constitutes a major challenge in cognitive networks.

For sensing the presence of the PU, the secondary users~(SUs) must be able to detect the signal of the PU. Various kinds of detectors can be used for spectrum sensing such as matched filter detectors, energy detectors, cyclostationary detectors or wavelet detectors \cite{DT00}. However, the performance of spectrum sensing is significantly affected by the degradation of the PU signal due to path loss or shadowing (hidden terminal). It has been shown that, through collaboration among SUs, the effects of this hidden terminal problem can be reduced and the probability of detecting the PU can be improved \cite{CS00,CS01,CS03}. For instance, in \cite{CS00} the SUs collaborate by sharing their sensing decisions through a centralized fusion center in the network. This centralized entity combines the sensing bits from the SUs using the OR-rule for data fusion and makes a final PU detection decision. A similar centralized approach is used in \cite{CS01} using different decision-combining methods. 
 The authors in \cite{CS03} propose spatial diversity techniques for improving the performance of collaborative spectrum sensing by combatting the error probability due to fading on the reporting channel between the SUs and the central fusion center. Existing literature mainly focused on the performance assessment of collaborative spectrum sensing in the presence of a centralized fusion center. However, in practice, the SUs may belong to different service providers and they need to interact with each other for collaboration without relying on a centralized fusion center. Moreover, a centralized approach can lead to a significant overhead and increased complexity.

The main contribution of this paper is to devise distributed collaboration strategies for SUs in a cognitive network. Another major contribution is to study the impact on the network topology of the inherent trade off that exists between the collaborative spectrum sensing gains in terms of detection probability and the cooperation costs in terms of false alarm probability. This trade off can be pictured as a trade off between reducing the interference on the PU (increasing the detection probability) while maintaining a good spectrum utilization (reducing the false alarm probability). For distributed collaboration, we model the problem as a non-transferable coalitional game and we propose a distributed algorithm for coalition formation based on simple merge and split rules. Through the proposed algorithm, each SU autonomously decides to form or break a coalition for maximizing its utility in terms of detection probability while accounting for a false alarm cost. We show that, due to the cost for cooperation, independent disjoint coalitions will form in the network. We study the stability of the resulting coalition structure and show that a maximum coalition size exists for the proposed utility model. Through simulations, we assess the performance of the proposed algorithm relative to the non-cooperative case, we compare it with a centralized solution and we show how the proposed algorithm autonomously adapts the network topology to environmental changes such as mobility.

The rest of this paper is organized as follows: Section~\ref{sec:prob} presents
the system model. In Section \ref{sec:coal}, we present the proposed coalitional game and prove different properties while in Section~\ref{sec:coalform} we devise a distributed algorithm for coalition formation. Simulation results are presented and analyzed in Section \ref{sec:sim}. Finally, conclusions are drawn in
Section \ref{sec:conc}.

\section{System Model}\label{sec:prob}
Consider a cognitive network consisting of $N$ transmit-receive pairs of SUs and a single PU. The SUs and the PU can be either stationary and mobile. Since the focus is on spectrum sensing, we are only interested in the transmitter part of each of the $N$ SUs. The set of all SUs is denoted $\mathcal{N} = \{1,\ldots,N\}$. In a non-cooperative approach, each of the $N$ SUs continuously senses the spectrum in order to detect the presence of the PU. For detecting the PU, we use energy detectors which are one of the main practical signal detectors in cognitive radio networks \cite{CS00,CS01,CS03}. In such a non-cooperative setting, assuming Rayleigh fading, the detection probability and the false alarm probability of a SU $i$ are, respectively, given by $P_{d,i}$ and $P_{f,i}$ \cite[Eqs. (2), (4)]{CS00}
\begin{align}\label{eq:detind}
P_{d,i} = e^{-\frac{\lambda}{2}} \sum_{n=0}^{m-2}\frac{1}{n!}\left(\frac{\lambda}{2}\right)^n + \left(\frac{1+\bar{\gamma}_{i,PU}}{\bar{\gamma}_{i,PU}}\right)^{m-1}
\nonumber\\
\times \left[e^{-\frac{\lambda}{2\left(1+\bar{\gamma}_{i,PU}\right)}} - e^{-\frac{\lambda}{2}} \sum_{n=0}^{m-2}\frac{1}{n!}\left(\frac{\lambda\bar{\gamma}_{i,PU}}{2(1+\bar{\gamma}_{i,PU})}\right)^n \right],
\end{align}

\begin{equation}\label{eq:falseind}
P_{f,i} = P_f= \frac{\Gamma(m,\frac{\lambda}{2})}{\Gamma(m)},
\end{equation}
where $m$ is the time bandwidth product, $\lambda$ is the energy detection threshold assumed the same for all SUs without loss of generality as in \cite{CS00,CS01,CS03}, $\Gamma(\cdot,\cdot)$ is the incomplete gamma function and $\Gamma(\cdot)$ is the gamma function. Moreover, $\bar{\gamma}_i$ represents the average SNR of the received signal from the PU to SU given by $\bar{\gamma}_{i,\textrm{PU}} = \frac{P_{\textrm{PU}}h_{\textrm{PU},i}}{\sigma^2}$ with $P_{\textrm{PU}}$ the transmit power of the PU,  $\sigma^2$ the Gaussian noise variance and $h_{\textrm{PU},i}=\kappa/d_{\textrm{PU},i}^{\mu}$ the path loss between the PU and SU $i$; $\kappa$ being the path loss constant, $\mu$ the path loss exponent and $d_{\textrm{PU},i}$ the distance between the PU and SU $i$. It is important to note that the non-cooperative false alarm probability expression depends \emph{solely} on the detection threshold $\lambda$ and does not depend on the SU's location; hence we dropped the subscript $i$ in (\ref{eq:falseind}).

Moreover, an important metric that we will thoroughly use is the missing probability for a SU $i$, which is defined as the probability of missing the detection of a PU and given by \cite{CS00}
\begin{equation}\label{eq:missprob}
P_{m,i} = 1 - P_{d,i}.
\end{equation}
For instance, reducing the missing probability directly maps to increasing the probability of detection and, thus, reducing the interference on the PU.
\begin{figure}[t]
\begin{center}
\includegraphics[scale=0.5]{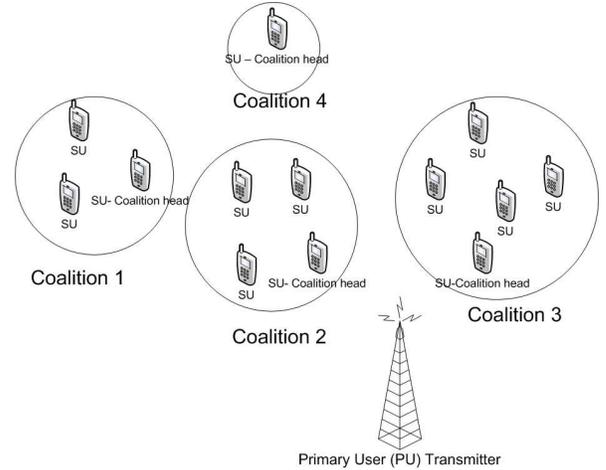}
\end{center}\vspace{-0.5cm}
\caption {An illustrative example of coalition formation for collaborative spectrum sensing among SUs.} \label{fig:ill}\vspace{-0.6cm}
\end{figure}
In order to minimize their missing probabilities, the SUs will interact for forming coalitions of collaborating SUs. Within each coalition $S \subseteq \mathcal{N}=\{1,\ldots,N\}$, a SU, selected as \emph{coalition head}, collects the sensing bits from the coalition's SUs and acts as a fusion center in order to make a coalition-based decision on the presence or absence of the PU. This can be seen as having the centralized collaborative sensing of \cite{CS00}, \cite{CS03} applied at the level of each coalition with the coalition head being the fusion center to which all the coalition members report. For combining the sensing bits and making the final detection decision, the coalition head will use the decision fusion OR-rule. Within each coalition we take into account the probability of error due to the fading on the reporting channel between the SUs of a coalition and the coalition head \cite{CS03}. Inside a coalition $S$, assuming BPSK modulation in Rayleigh fading environments, the probability of reporting error between a SU $i \in S$ and the coalition head $k \in S$ is given by \cite{PROAKIS}
 \begin{equation}\label{eq:err}
P_{e,i,k} = \frac{1}{2} \left(1-\sqrt{\frac{\bar{\gamma}_{i,k}}{1+ \bar{\gamma}_{i,k}}}\right),
\end{equation}
where $\bar{\gamma}_{i,k}=\frac{P_i h_{i,k}}{\sigma^2}$ is the average SNR for bit reporting between SU $i$ and the coalition head $k$ inside $S$ with $P_i$ the transmit power of SU $i$ used for reporting the sensing bit to $k$ and $h_{i,k}=\frac{\kappa}{d_{i,k}^{\mu}}$ the path loss between SU $i$ and coalition head $k$. Any SU can be chosen as a coalition head within a coalition. However, for the remainder of this paper, we adopt the following convention without loss of generality.
\begin{convention}
Within a coalition $S$, the SU $k \in S$ having the lowest non-cooperative missing probability $P_{m,k}$ is chosen as \emph{coalition head}. Hence, the coalition head $k$ of a coalition $S$ is given by $k= \underset{i\in S}{\operatorname{arg\,min}}\,P_{m,i}$ with $P_{m,i}$ given by (\ref{eq:missprob}).
\end{convention}
The motivation behind Convention~1 is that the SU having the lowest missing probability (best detection probability) within a coalition should not risk sending his local sensing bit over the fading reporting channel; and thus it will serve as a fusion center for the other SUs in the coalition. By collaborative sensing, the missing and false alarm probabilities of a coalition $S$ having coalition head $k$ are, respectively, given by \cite{CS03}
\begin{equation}\label{eq:missclu}
Q_{m,S} = \displaystyle \prod_{i \in S} [ P_{m,i}(1-P_{e,i,k}) + (1- P_{m,i})P_{e,i,k}],
\end{equation}
\begin{equation}\label{eq:falseclu}
Q_{f,S} =  1 - \displaystyle\prod_{i \in S} [ (1-P_{f})(1-P_{e,i,k}) + P_{f}P_{e,i,k}],
\end{equation}
where $P_{f}$, $P_{m,i}$ and $P_{e,i,k}$ are respectively given by (\ref{eq:falseind}), (\ref{eq:missprob}) and (\ref{eq:err}) for a SU $i \in S$ and coalition head $k \in S$.

It is clear from (\ref{eq:missclu}) and (\ref{eq:falseclu}) that as the number of SUs per coalition increases, the missing probability will decrease while the probability of false alarm will increase. This is a crucial trade off in collaborative spectrum sensing that can have a major impact on the collaboration strategies of each SU. Thus, our objective is to derive distributed strategies allowing the SUs to collaborate while accounting for this trade off. An example of the sought network structure is shown in Fig.~\ref{fig:ill}.

\section{Collaborative Spectrum Sensing As Coalitional Game}\label{sec:coal}
In this section, we model the problem of collaborative spectrum sensing as a coalitional game. Then we
prove and discuss its key properties.
\subsection{Centralized Approach}
A centralized approach can be used in order to find the optimal coalition structure, such as in Fig.\ref{fig:ill}, that allows the SUs to maximize their benefits from collaborative spectrum sensing. For instance, we seek a centralized solution that minimizes the average missing probability (maximizes the average detection probability) per SU subject to a false alarm probability constraint per SU. In  a centralized approach, we assume the existence of a centralized entity in the network that is able to gather information on the SUs such as their individual missing probabilities or their location. In brief, the centralized entity must be able to know all the required parameters for computing the probabilities in (\ref{eq:missclu}) and (\ref{eq:falseclu}) in order to find the optimal structure. However, prior to deriving such an optimal centralized solution, the following property must be pinpointed within each coalition.
\begin{property}
The missing and false alarm probabilities of any SU $i \in S$ are given by the missing and false alarm probabilities of the coalition $S$ in (\ref{eq:missclu}) and (\ref{eq:falseclu}), respectively.
\end{property}
\begin{proof}
Within each coalition $S$ the SUs report their sensing bits to the coalition head. In its turn the coalition head of $S$ combines the sensing bits using decision fusion and makes a final decision on the presence or absence of the PU. Thus, SUs belonging to a coalition $S$ will transmit or not based on the final coalition head decision. Consequently, the missing and false alarm probabilities of any SU $i \in S$ are the missing and false alarm probabilities of the coalition $S$ to which $i$ belongs as given by in (\ref{eq:missclu}) and (\ref{eq:falseclu}), respectively.
\end{proof}

As a consequence of Property~1 the required false alarm probability constraint per SU directly maps to a false alarm probability constraint \emph{per coalition}. Therefore, denoting $\mathcal{B}$ as the set of \emph{all partitions} of $\mathcal{N}$, the centralized approach seeks to solve the following optimization problem
 \begin{equation}\label{eq:opt}
\min_{\mathcal{P} \in \mathcal{B}}{\frac{\sum_{S \in \mathcal{P}}|S|\cdot Q_{m,S}}{N} },
 \end{equation}
\begin{equation*}
\textrm{ s.t. }Q_{f,S}\le \alpha \ \forall\ S \in \mathcal{P},
 \end{equation*}
 where $|\cdot|$ represents the cardinality of a set operator and $S$ is a coalition belonging to the partition $\mathcal{P}$. Clearly, the centralized optimization problem seeks to find the optimal partition $\mathcal{P}^{*} \in \mathcal{B}$ that minimizes the average missing probability per SU, subject to a false alarm constraint per SU (coalition).

However, it is shown in \cite{NP00} that finding the optimal coalition structure for solving an optimization problem such as in  (\ref{eq:opt}) is an NP-complete problem. This is mainly due to the fact that the number of possible coalition structures (partitions), given by the Bell number, grows exponentially with the number of SUs $N$ \cite{NP00}. Moreover, the complexity increases further due to the fact that the expressions of $Q_{m,S}$ and $Q_{f,S}$ given by (\ref{eq:missclu}) and (\ref{eq:falseclu}) depend on the optimization parameter $\mathcal{P}$. For this purpose, deriving a distributed solution enabling the SUs to benefit from collaborative spectrum sensing with a low complexity is desirable. The above formulated centralized approach will be used as a benchmark for the distributed solution in the simulations, for reasonably small networks.

\subsection{Game Formulation and Properties}
For the purpose of deriving a distributed algorithm that can minimize the missing probability per SU, we refer to cooperative game theory \cite{Game_theory2} which provides a set of analytical tools suitable for such algorithms. For instance, the proposed collaborative sensing problem can be modeled as a $(\mathcal{N},v)$ coalitional game \cite{Game_theory2} where $\mathcal{N}$ is the set of players (the SUs) and $v$ is the utility function or value of a coalition.

The value $v(S)$ of a coalition $S \subseteq \mathcal{N}$ must capture the trade off between the probability of detection and the probability of false alarm. For this purpose, $v(S)$ must be an increasing function of the detection probability $Q_{d,S} = 1 - Q_{m,S}$ within coalition $S$ and a decreasing function of the false alarm probability $Q_{f,S}$. A suitable utility function is given by
\begin{equation}\label{eq:util}
v(S) =  Q_{d,S} - C(Q_{f,S}) =  (1 - Q_{m,S}) - C(Q_{f,S}),
\end{equation}
where $Q_{m,S}$ is the missing probability of coalition $S$ given by (\ref{eq:missclu}) and $C(Q_{f,S})$ is a cost function of the false alarm probability within coalition $S$ given by (\ref{eq:falseclu}).

First of all, we provide the following definition from \cite{Game_theory2} and subsequently prove an interesting property pertaining to the proposed game model.
\begin{definition}
A coalitional game $(\mathcal{N},v)$ is said to have a \emph{transferable} utility if the value $v(S)$ can be arbitrarily apportioned between the coalition's players. Otherwise, the coalitional game has a \emph{non-transferable utility} and each player will have its own utility within coalition $S$.
\end{definition}
\begin{property}
In the proposed collaborative sensing game, the utility of a coalition $S$ is equal to the utility of each SU in the coalition, i.e., $v(S) = \phi_i(S), \ \forall\ i \in S$, where $\phi_i(S)$ denotes the utility of SU $i$ when $i$ belongs to a coalition $S$. Consequently, the proposed $(\mathcal{N},v)$ coalitional game model has a \emph{non-transferable} utility.
\end{property}
\begin{proof}
The coalition value in the proposed game is given by (\ref{eq:util}) and is a function of $Q_{m,S}$ and $Q_{f,S}$. As per Property~1, the missing probabilities for each SU $i \in S$ are also given by $Q_{m,S}$ and $Q_{f,S}$ and, thus, the utility of each SU $i \in S$ is $\phi_i(S) = v(S)$. Hence, the coalition value $v(S)$ cannot be arbitrarily apportioned among the users of a coalition; and the proposed coalitional game has non-transferable utility.
\end{proof}

In general, coalitional game based problems seek to characterize the properties and stability of the grand coalition of all players since it is generally assumed that the grand coalition maximizes the utilities of the players \cite{Game_theory2}. In our case, although collaborative spectrum sensing improves the detection probability for the SUs; the cost in terms of false alarm limits this gain. Therefore, for the proposed $(\mathcal{N},v)$ coalitional game we have the following property.
\begin{property}
For the proposed $(\mathcal{N},v)$ coalitional game, the grand coalition of all the SUs does not always form due to the collaboration false alarm costs; thus disjoint independent coalitions will form in the network.
\end{property}
\begin{proof}
By inspecting $Q_{m,S}$ in (\ref{eq:missclu}) and through the results shown in \cite{CS03} it is clear that as the number of SUs in a coalition increase $Q_{m,S}$ decreases and the performance in terms of detection probability improves. Hence, when no cost for collaboration exists, the grand coalition of all SUs is the optimal structure for maximizing the detection probability. However, when the number of SUs in a coalition $S$ increases, it is shown in \cite{CS03} through (\ref{eq:missclu}) that the false alarm probability increases. Therefore, for the proposed collaborative spectrum sensing model with cost for collaboration, the grand coalition of all SUs will, in general, not form due to the false alarm cost as taken into consideration in (\ref{eq:util}).
\end{proof}

In a nutshell, we have a non-transferable $(\mathcal{N},v)$ coalitional game and we seek to derive a distributed algorithm for forming coalitions among SUs. Before deriving such an algorithm, we will delve into the details of the cost function in (\ref{eq:util}).\vspace{-0.14cm}
\subsection{Cost Function}
Any well designed cost function $C(Q_{f,S})$ in (\ref{eq:util}) must satisfy several requirements needed for adequately modeling the false alarm cost. On one hand, $C(Q_{f,S})$ must be an increasing function of $Q_{f,S}$ with the increase slope becoming steeper as $Q_{f,S}$ increases. On the other hand, the cost function $C(Q_{f,S})$ must impose a maximum tolerable false alarm probability, i.e. an upper bound constraint on the false alarm, that cannot be exceeded by any SU in a manner similar to the centralized problem in (\ref{eq:opt}) (due to Property~1, imposing a false alarm constraint on the coalition maps to a constraint per SU).

A well suited cost function satisfying the above requirements is the logarithmic barrier penalty function given by \cite{BO00}
\begin{equation}\label{eq:logbarr}
C(Q_{f,S}) = \begin{cases}- \alpha^2\cdot\log{\left(1-\left(\frac{Q_{f,S}}{\alpha}\right)^2\right)}, & \mbox{if } Q_{f,S} < \alpha,\\ +\infty, &
\mbox{if } Q_{f,S} \ge \alpha, \end{cases}
\end{equation}
where $\log$ is the natural logarithm and $\alpha$ is a false alarm constraint per coalition (per SU). The cost function in (\ref{eq:logbarr}) allows to incur a penalty which is increasing with the false alarm probability. Moreover, it imposes a maximum false alarm probability per SU. In addition, as the false alarm probability gets closer to $\alpha$ the cost for collaboration increases steeply, requiring a significant improvement in detection probability if the SUs wish to collaborate as per (\ref{eq:util}). Also, it is interesting to note that the proposed cost function depends on both distance and the number of SUs in the coalition, through the false alarm probability $Q_{f,S}$ (the distance lies within the probability of error). Hence, the cost for collaboration increases with the number of SUs in the coalition as well as when the distance between the coalition's SUs increases.

\section{Distributed Coalition Formation Algorithm}\label{sec:coalform}
In this section, we propose a distributed coalition formation algorithm and we discuss its main properties.
\subsection{Coalition Formation Concepts}
Coalition formation has been a topic of high interest in game
theory \cite{NP00,DM00,KA01,KA02}. The goal is to find
algorithms for characterizing the coalitional structures that form in a network where the grand coalition is not optimal. For instance, a generic framework for coalition formation is presented in \cite{KA00,KA01,KA02} whereby coalitions form and break through
two simple merge-and-split rules. This framework can be used to construct
a distributed coalition formation algorithm for collaborative sensing, but first, we define the following concepts \cite{KA01,KA02}.

\begin{definition}
A \emph{collection} of coalitions, denoted  $\mathcal{S}$, is defined as the set $\mathcal{S} = \{S_{1},\ldots,S_{l}\}$ of mutually disjoint coalitions $S_{i} \subset \mathcal{N}$. If the collection spans \emph{all} the players of $ \mathcal{N}$; that is $\bigcup_{j=1}^{l} S_j =  \mathcal{N}$, the collection is a \emph{partition} of $\mathcal{N}$. 
\end{definition}
\begin{definition}
A preference operator or \emph{comparison relation} $\rhd$ is defined for comparing two collections $\mathcal{R} = \{R_{1},\ldots,R_{l}\}$ and
$\mathcal{S} = \{S_{1},\ldots,S_{m}\}$ that are partitions of the same subset $\mathcal{A} \subseteq \mathcal{N}$ (same players in $\mathcal{R}$ and $\mathcal{S}$). Thus, $\mathcal{R} \rhd \mathcal{S}$ implies that the way $\mathcal{R}$ partitions $\mathcal{A}$ is preferred to the way $\mathcal{S}$ partitions $\mathcal{A}$ based on a criterion to be defined next.
\end{definition}

Various criteria (referred to as \emph{orders}) can be used as comparison relations between collections or partitions \cite{KA01},\cite{KA02}. These orders are divided into two main categories: coalition value orders and individual value orders. Coalition value orders compare two collections (or partitions) using the value of the coalitions inside these collections such as in the utilitarian order where $\mathcal{R} \rhd \mathcal{S}$ implies $\sum_{i=1}^{l}v(R_i) > \sum_{i=1}^{m}v(S_i)$. Individual value orders perform the comparison using the actual player utilities and not the coalition value. For such orders, two collections $\mathcal{R}$ and $\mathcal{S}$ are seen as sets of player utilities of the same length $L$ (number of players). The players' utilities are either the payoffs after division of the value of the coalitions in a collection (transferable utility) or the actual utilities of the players belonging to the coalitions in a collection (non-transferable utility). Due to the non-transferable nature of the proposed $(\mathcal{N},v)$ collaborative sensing game (Property~2), an individual value order must be used as a comparison relation $\rhd$. An important example of individual value orders is the \emph{Pareto order}. Denote for a collection $\mathcal{R} = \{R_{1},\ldots,R_{l}\}$, the utility of a player $j$ in a coalition $R_j \in \mathcal{R}$ by $\phi_j(\mathcal{R})=\phi_j(R_j)=v(R_j)$ (as per Property~2); hence, the Pareto order is defined as follows
\begin{align}\label{eq:Pareto}
\mathcal{R} \rhd \mathcal{S} \Longleftrightarrow \{\phi_{j}(\mathcal{R}) \ge
\phi_{j}(\mathcal{S}) \  \forall \ j \in \mathcal{R},\mathcal{S}\},   \\ \textrm{ with \emph{at
least one strict inequality} ($>$) for a player } k. \nonumber
\end{align}
Due to the non-transferable nature of the proposed collaborative sensing model, the Pareto order is an adequate preference relation. Having defined the various concepts, we derive a distributed coalition formation algorithm in the next subsection.
\subsection{Coalition Formation Algorithm}\label{sec:mergeandsplit}
For autonomous coalition formation in cognitive radio networks, we propose a distributed algorithm based on two simple rules denoted as ``merge'' and ``split'' that
allow to modify a partition $\mathcal{T}$ of the SUs set $\mathcal{N}$ as follows \cite{KA01}.
\begin{definition}
\textbf{Merge Rule -} Merge any set of coalitions
$\{S_{1},\ldots,S_{l}\}$ where $\{\bigcup_{j=1}^{l}S_{j}\} \rhd \{S_{1},\ldots,S_{l}\} $, therefore, {$\{S_{1},\ldots,S_{l}\}
\rightarrow \{\bigcup_{j=1}^{l}S_{j}\}$}, (each $S_i$ is a coalition in $\mathcal{T}$).
\end{definition}
\begin{definition}
\textbf{Split Rule -} Split any coalition
$\bigcup_{j=1}^{l}S_{j}$ where  $\{S_{1},\ldots,S_{l}\} \rhd \{\bigcup_{j=1}^{l}S_{j}\} $, thus, $\{\bigcup_{j=1}^{l}S_{j}\}
\rightarrow \{S_{1},\ldots,S_{l}\}$, (each $S_i$ is a coalition in $\mathcal{T}$).
\end{definition}
Using the above rules, multiple coalitions can merge into a larger coalition if merging yields a preferred collection based on the selected order $\rhd$. Similarly, a coalition would split into smaller coalitions if splitting yields a preferred collection. When $\rhd$ is the Pareto order, coalitions will merge (split) only if at least one SU is able to strictly improve its individual utility through this merge (split) without decreasing the other SUs' utilities.
%
By using the merge-and-split rules combined with the Pareto order, a distributed coalition formation algorithm suited for collaborative spectrum sensing can be constructed. First and foremost, the appeal of forming coalitions using merge-and-split stems from the fact that it has been shown in \cite{KA01} and \cite{KA02} that any arbitrary iteration of
merge-and-split operations \emph{terminates}.  Moreover, each merge or split decision can be taken in a distributed manner by each individual SU or by each already formed coalition. Subsequently, a merge-and-split coalition algorithm can adequately model the distributed interactions among the SUs of a cognitive network that are seeking to collaborate in the sensing process.

In consequence, for the proposed collaborative sensing game, we construct a coalition formation algorithm based on merge-and-split and divided into three phases: local sensing, adaptive coalition formation, and coalition sensing. In the local sensing phase, each individual SU computes its own local PU detection bit based on the received PU signal. In the adaptive coalition formation phase, the SUs (or existing coalitions of SUs) interact in order to assess whether to share their sensing results with nearby coalitions. For this purpose, an iteration of sequential merge-and-split rules occurs in the network, whereby each coalition decides to merge (or split) depending on the utility improvement that merging (or splitting) yields. In the final coalition sensing phase, once the network topology converges following merge-and-split, SUs that belong to the same coalition report their local sensing bits to their local coalition head. The coalition head subsequently uses decision fusion OR-rule to make a final decision on the presence or the absence of the PU. This decision is then reported by the coalition heads to all the SUs within their respective coalitions.
\indent Each round of the three phases of the proposed algorithm starts from an initial network partition $\mathcal{T} =\{T_1,\ldots,T_l\}$ of $\mathcal{N}$. During the adaptive coalition formation phase any random coalition (individual SU) can start with the merge process. For implementation purposes, assume that the coalition  $T_i \in  \mathcal{T}$ which has the highest utility in the initial partition $\mathcal{T}$ starts the merge by attempting to collaborate with a nearby coalition. On one hand, if merging occurs, a new coalition $\tilde{T}_i$ is formed and, in its turn, coalition $\tilde{T}_i$ will attempt to merge with a nearby SU that can improve its utility. On the other hand, if $T_i$ is unable to merge with the firstly discovered partner, it tries to find other coalitions that have a mutual benefit in merging. The search ends by a final merged coalition $T_i^{\text{final}}$ composed of $T_i$ and one or several of coalitions in its vicinity ($T_i^{\text{final}} = T_i$, if no merge occurred). The algorithm is repeated for the remaining $T_i \in \mathcal{T}$ until all the coalitions have made their merge decisions, resulting in a final partition $\mathcal{F}$. Following the merge process, the coalitions in the resulting partition $\mathcal{F}$ are next subject to split operations, if any is possible. An iteration consisting of multiple successive merge-and-split operations is repeated until it terminates. It must be stressed that the decisions to merge or split can be taken in a distributed way without relying on any centralized entity as each SU or coalition can make its own decision for merging or splitting. Table~\ref{tab:alg1} summarizes one round of the proposed algorithm.

For handling environmental changes such as mobility or the joining/leaving of SUs, Phase~2 of the proposed algorithm in Table~\ref{tab:alg1} is repeated periodically. In Phase~2, periodically, as time evolves and SUs (or the PU) move or join/leave,  the SUs can autonomously self-organize and adapt the network's topology through new merge-and-split iterations with each coalition taking the decision to merge (or split) subject to satisfying the merge (or split) rule through Pareto order (\ref{eq:Pareto}). In other words, every period of time $\theta$ the SUs assess the possibility of splitting into smaller coalitions or merging with new partners. The period $\theta$ is smaller in highly mobile environments to allow a more adequate adaptation of the topology. Similarly, every period $\theta$, in the event where the current coalition head of a coalition has moved or is turned off, the coalition members may select a new coalition head if needed. The convergence of this merge-and-split adaptation to environmental changes is always guaranteed, since, by definition of the merge and split rules, any iteration of these rules certainly terminates.

For the proposed coalition formation algorithm, an upper bound on the maximum coalition size is imposed by the proposed utility and cost models in (\ref{eq:util}) and (\ref{eq:logbarr}) as follows:
\begin{theorem}
For the proposed collaborative sensing model, any coalition structure resulting from the distributed coalition formation algorithm will have coalitions limited in size to a maximum of $M_{\max}=\frac{\log{(1-\alpha)}}{\log{(1-P_f)}}$ SUs.
\end{theorem}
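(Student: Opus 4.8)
The plan is to exploit the infinite-penalty branch of the cost function (\ref{eq:logbarr}): whenever $Q_{f,S}\ge\alpha$ the cost $C(Q_{f,S})=+\infty$, so by (\ref{eq:util}) the coalition value is $v(S)=-\infty$, and by Property~2 every SU $i\in S$ inherits the individual utility $\phi_i(S)=v(S)=-\infty$. Since each SU acting alone has the finite utility $v(\{i\})=(1-P_{m,i})-C(P_f)$ (recall that a singleton has no reporting channel, so $Q_{f,\{i\}}=P_f<\alpha$), no merge producing a coalition with $Q_{f,S}\ge\alpha$ can ever satisfy the Pareto order (\ref{eq:Pareto}), as it would strictly decrease the utility of every member. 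Hence such coalitions never appear in any partition generated by the merge-and-split algorithm, and it suffices to determine the largest size $|S|$ for which $Q_{f,S}<\alpha$ is at all attainable.

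First I would obtain a size-dependent lower bound on $Q_{f,S}$ that is independent of the reporting channels. Writing $a_i=(1-P_f)(1-P_{e,i,k})+P_f\,P_{e,i,k}$ for the per-user factor in (\ref{eq:falseclu}), a short rearrangement gives $a_i-(1-P_f)=P_{e,i,k}(2P_f-1)$. Because the Rayleigh reporting-error expression (\ref{eq:err}) always satisfies $P_{e,i,k}\le\tfrac12$, and the operating false-alarm level obeys $P_f\le\tfrac12$, we have $P_{e,i,k}(2P_f-1)\le 0$, so $a_i\le 1-P_f$ for every $i\in S$. Multiplying these inequalities over the coalition yields $\prod_{i\in S}a_i\le(1-P_f)^{|S|}$, and therefore
\begin{equation}\label{eq:qflb}
Q_{f,S}=1-\prod_{i\in S}a_i\ \ge\ 1-(1-P_f)^{|S|}.
\end{equation}
The bound (\ref{eq:qflb}) is tight, being met with equality in the ideal case $P_{e,i,k}=0$ (perfect reporting), so it represents the smallest false-alarm probability a coalition of a given size can possibly achieve.

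The theorem then follows by imposing $1-(1-P_f)^{|S|}<\alpha$, the necessary condition for $Q_{f,S}<\alpha$ to be feasible. Rearranging gives $(1-P_f)^{|S|}>1-\alpha$; taking natural logarithms and dividing by $\log(1-P_f)<0$, which reverses the inequality, produces $|S|<\frac{\log(1-\alpha)}{\log(1-P_f)}=M_{\max}$. Consequently any coalition of size $|S|\ge M_{\max}$ necessarily has $Q_{f,S}\ge\alpha$, hence infinite cost and $-\infty$ utility, so it is barred from forming by the Pareto merge rule (and would be dissolved by the split rule if present in the initial partition). Every coalition produced by the algorithm is thus limited to fewer than $M_{\max}$ members.

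The substantive step is the channel-independent lower bound (\ref{eq:qflb}); once it is in place, the logarithmic manipulation is routine. The only point requiring care is the sign of $2P_f-1$, i.e.\ the mild and physically natural assumption $P_f\le\tfrac12$ (together with $P_{e,i,k}\le\tfrac12$ from (\ref{eq:err})), which guarantees that reporting errors can only raise the false-alarm probability and hence that the error-free configuration is genuinely the best case. I would flag this assumption explicitly, since without $P_f\le\tfrac12$ the factor $a_i$ could exceed $1-P_f$ and the clean bound $M_{\max}$ would have to be replaced by a channel-dependent quantity.
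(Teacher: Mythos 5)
Your proof is correct and follows essentially the same route as the paper's: both identify the error-free (perfect-reporting) coalition as the minimizer of $Q_{f,S}$, observe that the barrier cost (\ref{eq:logbarr}) makes $Q_{f,S}\ge\alpha$ untenable, and solve $1-(1-P_f)^{|S|}<\alpha$ for $|S|$. Your write-up is in fact more careful than the paper's in two spots: you prove, rather than assert, that $P_{e,i,k}=0$ is the best case — exposing the implicit assumption $P_f\le\tfrac{1}{2}$, which is harmless here since cooperation requires $P_f<\alpha\le 0.1$ — and you make explicit, via Property~2 and the Pareto order, why an infinite-cost coalition can never survive the merge-and-split dynamics.
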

\begin{proof}
For forming coalitions, the proposed algorithm requires an improvement in the utility of the SUs through Pareto order. However, the benefit from collaboration is limited by the false alarm probability cost modeled by the barrier function (\ref{eq:logbarr}). A minimum false alarm cost in a coalition  $S$ with coalition head $k\in S$ exists whenever the reporting channel is perfect, i.e., exhibiting no error, hence $P_{e,i,k} = 0\ \forall i\in S$. In this perfect case, the false alarm probability in a perfect coalition $S_p$ is given by
\begin{equation}\label{eq:perf}
Q_{f,S_p} =  1 - \displaystyle\prod_{i \in S_p}  (1-P_{f}) = 1- (1-P_f)^{|S_p|},
\end{equation}
where $|S_p|$ is the number of SUs in the perfect coalition $S_p$. A perfect coalition $S_p$ where the reporting channels inside are perfect (i.e. SUs are grouped very close to each other) can accommodate the largest number of SUs relative to other coalitions. Hence, we can use this perfect coalition to find an upper bound on the maximum number of SUs per coalition. For instance, the log barrier function in (\ref{eq:logbarr}) tends to infinity whenever the false alarm probability constraint per coalition is reached which implies an upper bound on the maximum number of SUs per coalition if
$Q_{f,S_p}\ge \alpha,$
yielding by (\ref{eq:perf})
\begin{equation}
|S_p| \le \frac{\log{(1-\alpha)}}{\log{(1-P_f)}} = M_{\max}\ .
\end{equation}
\end{proof}
 \begin{table}[!t]
  \centering
  \caption{
    \vspace*{-0.7em}One round of the proposed collaborative sensing algorithm}\vspace*{-1em}
    \begin{tabular}{p{8cm}}
      \hline
      \textbf{Initial State} \vspace*{.5em} \\
      \hspace*{1em} The network is partitioned by $\mathcal{T}=\{T_1,\ldots,T_k\}$ (At the beginning of all time $\mathcal{T}$ = $\mathcal{N}$ = $\{1,\ldots,N\}$ with non-cooperative SUs). \vspace*{.1em}\\
\textbf{Three phases in each round of the coalition formation algorithm} \vspace*{.5em}\\
\hspace*{1em}\emph{Phase 1 - Local Sensing:}   \vspace*{.1em}\\
\hspace*{2em}Each individual SU computes its local PU signal sensing bit.\vspace*{.1em}\\
\hspace*{1em}\emph{Phase 2 - Adaptive Coalition Formation:}   \vspace*{.1em}\\
\hspace*{2em}In this phase, coalition formation using merge-and-split occurs. \vspace*{.2em}\\
\hspace*{3em}\textbf{repeat}\vspace*{.2em}\\
\hspace*{4em}a) $\mathcal{F}$ = Merge($\mathcal{T}$); coalitions in $\mathcal{T}$ decide to merge\\ \hspace*{4em}based on the merge algorithm explained in Section~\ref{sec:mergeandsplit}.
\hspace*{4em}b) $\mathcal{T}$ = Split($\mathcal{F}$); coalitions in $\mathcal{F}$ decide to split based on \hspace*{4em}the Pareto order.\\
\hspace*{3em}\textbf{until} merge-and-split terminates.\vspace*{.2em}\\
\hspace*{1em}\emph{Phase 3 - Coalition Sensing:}   \vspace*{.1em}\\
\hspace*{2em}a) Each SU reports its sensing bit to the coalition head.\vspace*{.1em}\\
\hspace*{2em}b) The coalition head of each coalition makes a final decision on\vspace*{.1em}\\
\hspace*{2em}the absence or presence of he PU using decision fusion OR-rule.\vspace*{.1em}\\
\hspace*{2em}c) The SUs in a coalition abide by the final decision made by the\vspace*{.1em}\\
\hspace*{2em}coalition head.\vspace*{.1em}\\
\textbf{The above phases are repeated throughout the network operation. In Phase~2, through distributed and periodic merge-and-split decisions, the SUs can autonomously adapt the network topology to environmental changes such as mobility. } \vspace*{.5em}\\
   \hline
    \end{tabular}\label{tab:alg1}\vspace{-0.5cm}
\end{table}

It is interesting to note that the maximum size of a coalition $M_{\max}$ depends mainly on two parameters: the false alarm constraint $\alpha$ and the non-cooperative false alarm $P_f$. For instance, larger false alarm constraints allow larger coalitions, as the maximum tolerable cost limit for collaboration is increased. Moreover, as the non-cooperative false alarm $P_f$ decreases, the possibilities for collaboration are better since the increase of the false alarm due to coalition size becomes smaller as per (\ref{eq:falseclu}).  It must be noted that the dependence of $M_{\max}$ on $P_f$ yields a direct dependence of $M_{\max}$ on the energy detection threshold $\lambda$ as per (\ref{eq:falseind}). Finally, it is interesting to see that the upper bound on the coalition size does not depend on the location of the SUs in the network nor on the actual number of SUs in the network. Therefore, deploying more SUs or moving the SUs in the network for a fixed $\alpha$ and $P_f$ does not increase the upper bound on coalition size.

\subsection{Stability}

The result of the proposed algorithm in Table~\ref{tab:alg1} is a
network partition composed of disjoint independent coalitions of SUs. The stability of this resulting network structure can be investigated using the concept of a defection function $\mathbb{D}$ \cite{KA01}.
\begin{definition}
A \emph{defection} function $\mathbb{D}$ is a function which
associates with each partition $\mathcal{T}$ of $\mathcal{N}$ a group of
collections in $\mathcal{N}$. A partition $\mathcal{T} = \{T_1,\ldots,T_l\}$ of $\mathcal{N}$ is \emph{$\mathbb{D}$-stable} if no group of players is interested in leaving $\mathcal{T}$ when the players who leave can only form the collections allowed by $\mathbb{D}$.
\end{definition}

Two important defection functions must be characterized  \cite{KA00,KA01,KA02}. First, the $\mathbb{D}_{hp}(\mathcal{T})$
function (denoted $\mathbb{D}_{hp}$) which associates with each partition
$\mathcal{T}$ of $\mathcal{N}$ the group of all partitions of $N$ that the players
can form through merge-and-split operations applied to $\mathcal{T}$. This
function allows any group of players to leave the partition $\mathcal{T}$ of
$\mathcal{N}$ through merge-and-split operations to create another
\emph{partition} in $\mathcal{N}$. Second, the $\mathbb{D}_{c}(\mathcal{T})$ function
(denoted $\mathbb{D}_{c}$) which associates with each partition $\mathcal{T}$ of $\mathcal{N}$
the family of all collections in $\mathcal{N}$. This function allows any
group of players to leave the partition $\mathcal{T}$ of $\mathcal{N}$ through
\emph{any} operation and create an arbitrary \emph{collection} in
$\mathcal{N}$. Two forms of stability stem from these definitions:
$\mathbb{D}_{hp}$ stability and a stronger $\mathbb{D}_{c}$
stability. A partition $\mathcal{T}$ is $\mathbb{D}_{hp}$-stable, if no
players in $\mathcal{T}$ are interested in leaving $\mathcal{T}$ through
merge-and-split to form other partitions in $\mathcal{N}$; while a partition
$\mathcal{T}$ is $\mathbb{D}_{c}$-stable, if no players in $\mathcal{T}$ are
interested in leaving $\mathcal{T}$ through \emph{any} operation (not
necessary merge or split) to form other collections in $\mathcal{N}$.

Characterizing any type of $\mathbb{D}$-stability for a partition depends on various properties of its coalitions. For instance, a partition $\mathcal{T}=\{T_1,\ldots,T_l\}$ is $\mathbb{D}_{hp}$-stable if, for the partition $\mathcal{T}$, no coalition has an incentive to split or merge. As an immediate result of the definition of $\mathbb{D}_{hp}$-stability we have

\begin{theorem}
Every partition resulting from our proposed coalition formation algorithm is $\mathbb{D}_{hp}$-stable.
\end{theorem}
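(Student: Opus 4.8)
The plan is to establish $\mathbb{D}_{hp}$-stability by showing that any partition output by the algorithm is a \emph{fixed point} of the merge-and-split dynamics, and then to appeal to the characterization of $\mathbb{D}_{hp}$-stability stated immediately before the theorem, namely that a partition $\mathcal{T}$ is $\mathbb{D}_{hp}$-stable precisely when no coalition in $\mathcal{T}$ has an incentive to merge or to split. Since $\mathbb{D}_{hp}(\mathcal{T})$ consists of all partitions reachable from $\mathcal{T}$ by merge-and-split, and such reachability decomposes into single merge or split steps, it suffices to rule out a single profitable one-step deviation.

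The first step is to invoke the termination guarantee: as recalled in Section~\ref{sec:mergeandsplit} and proved in \cite{KA01,KA02}, any sequence of merge-and-split operations terminates after finitely many iterations, so the algorithm in Table~\ref{tab:alg1} produces a well-defined final partition $\mathcal{T}_f$. The second, and key, step is to argue that no merge or split rule can be applied to $\mathcal{T}_f$. I would do this by contradiction from the very structure of Phase~2: the repeat/until loop halts only when a full pass of merge followed by split leaves the partition unchanged. If some set of coalitions in $\mathcal{T}_f$ satisfied the merge condition $\{\bigcup_{j}S_j\}\rhd\{S_1,\ldots,S_l\}$, or some coalition satisfied the split condition under the Pareto order $\rhd$, then that operation would have been executed, altering the partition and preventing termination at $\mathcal{T}_f$ --- a contradiction.

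It then remains to translate ``no applicable merge or split'' into the language of stability. By the definition of the Pareto order in (\ref{eq:Pareto}), a merge or split is applicable exactly when at least one SU strictly improves its utility $\phi_j$ while no SU loses; hence the absence of any applicable operation means that no coalition of $\mathcal{T}_f$ has an incentive to merge or split. By the characterization recalled before the theorem, $\mathcal{T}_f$ is therefore $\mathbb{D}_{hp}$-stable. The main subtlety to handle carefully is not computational but definitional: one must verify that $\mathbb{D}_{hp}$-reachability (arbitrary merge-and-split sequences yielding a \emph{partition}) collapses to the single-step fixed-point condition, so that a terminal partition of the algorithm is automatically stable against the entire family of deviations permitted by $\mathbb{D}_{hp}$. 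Once this reduction is in place, the result is immediate from termination.
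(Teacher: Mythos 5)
Your proposal is correct and follows essentially the same route as the paper: the paper states the theorem as an immediate consequence of the definition of $\mathbb{D}_{hp}$-stability (a partition is $\mathbb{D}_{hp}$-stable exactly when no coalition has an incentive to merge or split), which is precisely your fixed-point argument --- the Phase~2 loop terminates only when no merge or split rule is applicable, so the final partition satisfies that characterization. Your extra care in noting that multi-step $\mathbb{D}_{hp}$-reachability reduces to the one-step condition is a sound elaboration of what the paper leaves implicit, but it is the same argument, not a different one.
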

Briefly, a $\mathbb{D}_{hp}$-stable can be thought of as a state of equilibrium where no coalitions have an incentive to pursue coalition formation through merge or split.
With regards to $\mathbb{D}_{c}$ stability, the work in \cite{KA00,KA01,KA02} proved that a $\mathbb{D}_{c}$-stable partition has the following properties:
\begin{enumerate}
\item If it exists, a $\mathbb{D}_{c}$-stable partition is the
\emph{unique} outcome of any \emph{arbitrary} iteration of
merge-and-split and is a $\mathbb{D}_{hp}$-stable
partition. \item A $\mathbb{D}_{c}$-stable partition $\mathcal{T}$ is a
unique $\rhd$-maximal partition, that is for all partitions $ \mathcal{T}'
\neq \mathcal{T}$ of $\mathcal{N}$, $\mathcal{T} \rhd \mathcal{T}'$. In the case where $\rhd$ represents
the Pareto order, this implies that the $\mathbb{D}_{c}$-stable
partition $\mathcal{T}$ is the partition that presents a \emph{Pareto
optimal} utility distribution for all the players.
\end{enumerate}

However, the existence of a $\mathbb{D}_{c}$-stable partition is not always guaranteed \cite{KA01}. The  $\mathbb{D}_{c}$-stable partition $\mathcal{T} = \{T_{1},\ldots,T_{l} \}$ of the whole space $\mathcal{N}$ exists if a partition of $\mathcal{N}$ that verifies the following two necessary and sufficient conditions exists\cite{KA01}:
\begin{enumerate}
\item For each $i\in \{1,\ldots,l\}$ and each pair of disjoint \emph{coalitions}  $S_1$ and $S_2$ such that $\{S_1 \cup S_2\} \subseteq T_i$ we have
$\{S_1 \cup S_2\} \rhd \{S_1,S_2\}$.
\item For the partition $\mathcal{T}=\{T_1,\ldots,T_l\}$ a coalition $G \subset \mathcal{N}$ formed of players belonging to different $T_i \in \mathcal{T}$ is $\mathcal{T}$-incompatible if for no
$i \in \{1,\ldots,l\}$ we have $G\subset T_i$. $\mathbb{D}_{c}$-stability requires that for  all $\mathcal{T}$-incompatible
coalitions $\{G\}[\mathcal{T}] \rhd \{G\}$ where $\{G\}[\mathcal{T}] = \{G\cap T_i \ \forall \ i\in\{1,\ldots,l\}\}$ is the projection of coalition $G$ on $\mathcal{T}$.

\end{enumerate}
If no partition of $\mathcal{N}$ can satisfy these conditions, then no $\mathbb{D}_{c}$-stable partitions of $\mathcal{N}$ exists. Nevertheless, we have
\begin{lemma}
For the proposed $(\mathcal{N},v)$ collaborative sensing coalitional game, the proposed algorithm of Table~\ref{tab:alg1} converges to the optimal $\mathbb{D}_{c}$-stable partition, if such a partition exists. Otherwise, the proposed algorithm yields a final network partition that is $\mathbb{D}_{hp}$-stable.
\end{lemma}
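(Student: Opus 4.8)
The plan is to decompose the statement into its two cases and dispatch each by invoking the properties already established for merge-and-split dynamics, rather than by a direct computation. The crucial observation underpinning the whole argument is that the algorithm of Table~\ref{tab:alg1} is, by construction, nothing more than a sequential iteration of the merge and split rules under the Pareto order (\ref{eq:Pareto}); consequently every property known to hold for an \emph{arbitrary} iteration of merge-and-split transfers verbatim to the terminal partition produced by our algorithm. I would also recall at the outset that any such iteration is guaranteed to terminate \cite{KA01,KA02}, so the algorithm always halts at some well-defined partition $\mathcal{T}$.

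First I would treat the case where a $\mathbb{D}_{c}$-stable partition exists. Here I would invoke the first of the two cited properties of $\mathbb{D}_{c}$-stable partitions \cite{KA00,KA01,KA02}: whenever such a partition exists, it is the \emph{unique} outcome of any arbitrary merge-and-split iteration. Since our algorithm performs exactly such an iteration and is guaranteed to terminate, its terminal partition must coincide with this unique $\mathbb{D}_{c}$-stable partition. To justify the word \emph{optimal} in the statement, I would then appeal to the second cited property: a $\mathbb{D}_{c}$-stable partition is the unique $\rhd$-maximal partition, which, because the comparison relation $\rhd$ is the Pareto order (\ref{eq:Pareto}), means it yields a Pareto-optimal utility distribution over all SUs. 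Convergence is therefore to the optimal partition, as claimed.

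Next I would handle the \emph{otherwise} case, in which no $\mathbb{D}_{c}$-stable partition exists. This part follows immediately from Theorem~2, which asserts that every partition resulting from the proposed algorithm is $\mathbb{D}_{hp}$-stable, independently of whether a $\mathbb{D}_{c}$-stable partition happens to exist. Combined again with termination, this shows that the algorithm halts at a partition that is at least $\mathbb{D}_{hp}$-stable, which is precisely the fallback guarantee stated in the lemma.

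The main obstacle here is conceptual rather than computational: it lies in making explicit that the output of the algorithm legitimately qualifies as \emph{an arbitrary iteration of merge-and-split}, so that the uniqueness result of \cite{KA01} may be applied as stated. Once this identification is in place, both cases reduce to a direct citation of the imported stability properties together with Theorem~2 and the termination guarantee, with no further calculation needed.
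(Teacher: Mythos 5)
Your proposal is correct and follows essentially the same route as the paper: the paper's proof likewise combines Theorem~2 with the cited property that a $\mathbb{D}_{c}$-stable partition, when it exists, is the unique outcome of any arbitrary merge-and-split iteration, of which the algorithm's output is an instance. Your version merely spells out the two cases, the termination guarantee, and the Pareto-maximality justification of the word \emph{optimal} more explicitly than the paper does.
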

\begin{proof}
The proof is an immediate consequence of Theorem~2 and the fact that the  $\mathbb{D}_{c}$-stable partition is a unique outcome of any arbitrary merge-and-split iteration which is the case with any partition resulting from our algorithm.
\end{proof}
\begin{figure}[t]
\begin{center}
\includegraphics[angle=0,width=\columnwidth]{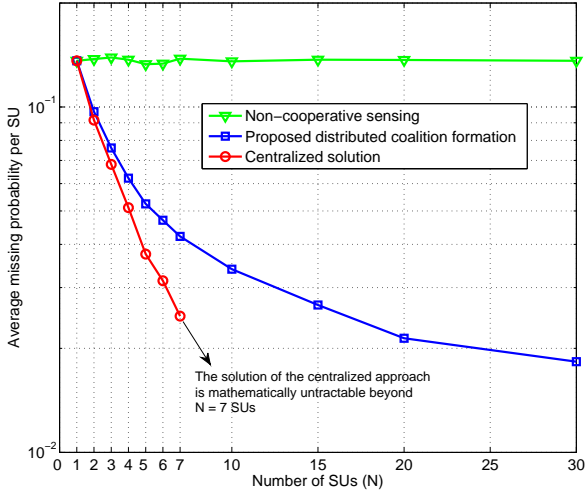}
\end{center}\vspace{-0.7cm}
\caption {Average missing probabilities (average over locations of SUs and non-cooperative false alarm range $P_f \in (0,\alpha)$ ) vs. number of SUs.} \label{fig:perf}\vspace{-0.7cm}
\end{figure}

Moreover, for the proposed game, the existence of the $\mathbb{D}_{c}$-stable partition cannot be always guaranteed. For instance, for verifying the first condition for existence of the  $\mathbb{D}_{c}$-stable partition, the SUs belonging to partitions of each coalitions must verify the Pareto order through their utility given by (\ref{eq:util}). Similarly, for verifying the second condition of $\mathbb{D}_{c}$ stability,  SUs belonging to all $\mathcal{T}$-incompatible coalitions in the network must verify the Pareto order. Consequently, finding a geometrical closed-form condition for the existence of such a partition is not feasible as it depends on the location of the SUs and the PU through the individual missing and false alarm probabilities in the utility expression (\ref{eq:util}). Hence, the existence of the $\mathbb{D}_{c}$-stable partition  is closely tied to the location of the SUs and the PU which both can be random parameters in practical networks. However, the proposed algorithm will always guarantee convergence to this optimal  $\mathbb{D}_{c}$-stable partition when it exists as stated in Lemma~1. Whenever a $\mathbb{D}_{c}$-stable partition does not exist, the coalition structure resulting from the proposed algorithm will be $\mathbb{D}_{hp}$-stable (no coalition or SU is able to merge or split any further).\vspace{-2.5mm}

\section{Simulation Results and Analysis}\label{sec:sim}
\begin{figure}[t]
\begin{center}
\includegraphics[angle=0,width=\columnwidth]{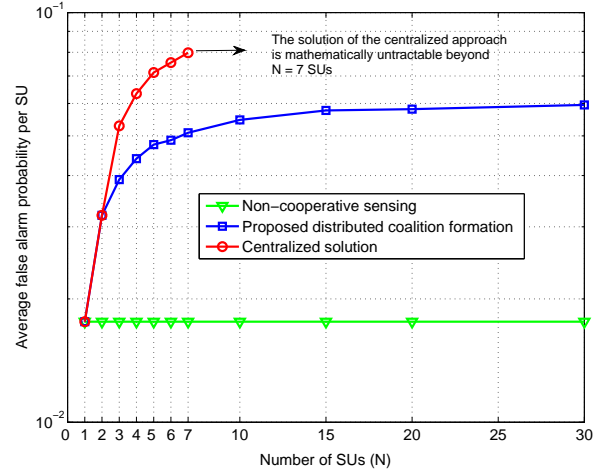}
\end{center}\vspace{-0.5cm}
\caption {Average false alarm probabilities (average over locations of SUs and non-cooperative false alarm range $P_f \in (0,\alpha)$ ) vs. number of SUs.} \label{fig:falarm}\vspace{-0.6cm}
\end{figure}
For simulations, the following network is set up: The PU is
placed at the origin of a $3$~km $\times 3$~km square with the SUs randomly deployed in the area around the PU. We set the time bandwidth product $m=5$ \cite{CS00,CS01,CS03}, the PU transmit power $P_{\textrm{PU}}=100$~mW, the SU transmit power for reporting the sensing bit $P_i=10$~mW $\forall i \in \mathcal{N}$ and the noise level $\sigma^2=-90$~dBm. For path loss, we set $\mu=3$ and $\kappa = 1$. The maximum false alarm constraint is set to $\alpha = 0.1$, as recommended by the IEEE 802.22 standard \cite{RE00}.

In Figs.~\ref{fig:perf} and \ref{fig:falarm} we show, respectively, the average missing probabilities and the average false alarm probabilities achieved per SU for different network sizes. These probabilities are averaged over random locations of the SUs as well as a range of energy detection thresholds $\lambda$ that do not violate the false alarm constraint; this in turn, maps into an average over the non-cooperative false alarm range $P_f \in (0,\alpha)$ (obviously, for $P_f > \alpha$ no cooperation is possible). In Fig.~\ref{fig:perf}, we show that the proposed algorithm yields a significant improvement in the average missing probability reaching up to $86.6\%$ reduction (at $N=30$) compared to the non-cooperative case. This advantage is increasing with the network size $N$. However, there exists a gap in the performance of the proposed algorithm and that of the optimal centralized solution. This gap stems mainly from the fact that the log barrier function used in the distributed algorithm (\ref{eq:logbarr}) increases the cost drastically when the false alarm probability is in the vicinity of $\alpha$. This increased cost makes it harder for coalitions with false alarm levels close to $\alpha$ to collaborate in the distributed approach as they require a large missing probability improvement to compensate the cost in their utility (\ref{eq:util}) so that a Pareto order merge or split becomes possible. However, albeit the proposed cost function yields a performance gap in terms of missing probability,  it forces a false alarm for the distributed case smaller than that of the centralized solution as seen in Fig.~\ref{fig:falarm}.

For instance, Fig.~\ref{fig:falarm} shows that the achieved average false alarm by the proposed distributed solution outperforms that of the centralized solution but is still outperformed by the non-cooperative case. Thus, while the centralized solution achieves a better missing probability; the proposed distributed algorithm compensates this performance gap through the average achieved false alarm. In summary, Figs.~\ref{fig:perf} and \ref{fig:falarm} clearly show the performance trade off that exists between the gains achieved by collaborative spectrum sensing in terms of average missing probability and the corresponding cost in terms of average false alarm probability.

In Fig.~\ref{fig:perffal}, we show the average missing probabilities per SU for different energy detection thresholds $\lambda$ expressed by the feasible range of non-cooperative false alarm probabilities $P_f \in (0,\alpha)$ for $N=7$. In this figure, we show that as the non-cooperative $P_f$ decreases the performance advantage of collaborative spectrum sensing for both the centralized and distributed solutions increases (except for very small $P_f$ where the advantage in terms of missing probability reaches its maximum). The performance gap between centralized and distributed is once again compensated by a false alarm advantage for the distributed solution as already seen and explained in Fig.~\ref{fig:falarm} for $N=7$. Finally, in this figure, it must be noted that as $P_f$ approaches $\alpha = 0.1$, the advantage for collaborative spectrum sensing diminishes drastically as the network converges towards the non-cooperative case.
\begin{figure}[t]
\begin{center}
\includegraphics[angle=0,width=\columnwidth]{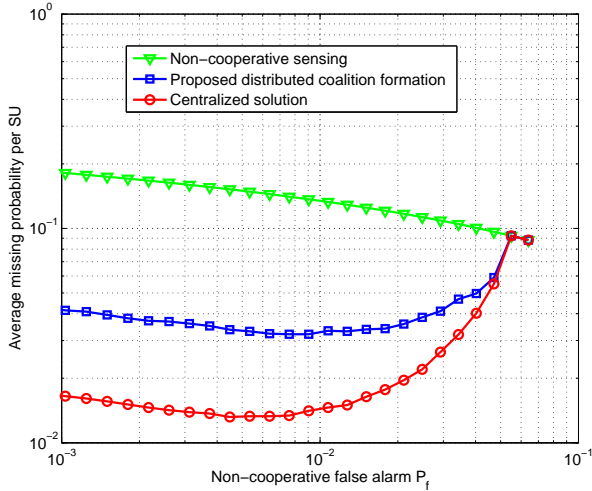}
\end{center}\vspace{-0.5cm}
\caption {Average missing probabilities per SU vs. non-cooperative false alarm $P_f$ (or energy detection threshold $\lambda$) for $N=7$~SUs.} \label{fig:perffal}\vspace{-0.7cm}
\end{figure}

In Fig.~\ref{fig:snapshot}, we show a snapshot of the network structure resulting from the proposed distributed algorithm (dashed line) as well as the centralized approach (solid line) for $N=7$ randomly placed SUs and a non-cooperative false alarm $P_f = 0.01$. We notice that the structures resulting from both approaches are almost comparable, with nearby SUs forming collaborative coalitions for improving their missing probabilities. However, for the distributed solution, SU~4 is part of coalition $S_1=\{1,2,4,6\}$ while for the centralized approach SU~4 is part of coalition $\{3,4,5\}$. This difference in the network structure is due to the fact that, in the distributed case, SU~4 acts selfishly while aiming at improving its own utility. In fact, by merging with $\{3,5\}$ SU~4 achieves a utility of $\phi_4(\{3,5\})=0.9859$ with a missing probability of $0.0024$ whereas by merging with $\{1,2,6\}$ SU~4 achieves a utility of $\phi_4(\{1,2,4,6\})=0.9957$ with a missing probability of $0.00099$. Thus, when acting autonomously in a distributed manner, SU~4 prefers to merge with $\{1,2,6\}$ rather than with $\{3,5\}$ regardless of the optimal structure for the network as a whole. In brief, Fig.~\ref{fig:snapshot} shows how the cognitive network structures itself for both centralized and distributed approaches.
\begin{figure}[t]
\begin{center}
\includegraphics[angle=0,width=\columnwidth]{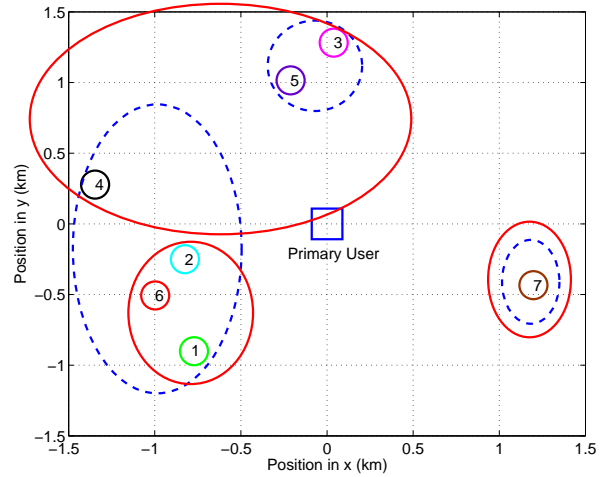}
\end{center}\vspace{-0.5cm}
\caption {Final coalition structure from both distributed (dashed line) and centralized (solid line) collaborative spectrum sensing for $N=7$ SUs.} \label{fig:snapshot}\vspace{-0.7cm}
\end{figure}

Furthermore, in Fig.~\ref{fig:mob} we show how our distributed algorithm in Table~\ref{tab:alg1} handles mobility during Phase~2 (adaptive coalition formation). For this purpose, after the network structure in Fig.~\ref{fig:snapshot} has formed, we allow SU~1 to move horizontally along the positive x-axis while other SUs are immobile. In Fig.~\ref{fig:mob}, at the beginning, the utilities of SUs $\{1,2,4,6\}$ are similar since they belong to the same coalition. These utilities decrease as SU~1 distances itself from $\{2,4,6\}$. After moving $0.8$~km SUs $\{1,6\}$ split from coalition $\{1,2,4,6\}$ by Pareto order as $\phi_1(\{1,6\})=0.9906 > \phi_1(\{1,2,4,6\})=0.99$, $\phi_6(\{1,6\})=0.9906 > \phi_6(\{1,2,4,6\})=0.99$, $\phi_2(\{2,4\})=0.991 > \phi_2(\{1,2,4,6\})=0.99$ and $\phi_4(\{2,4\})=0.991 > \phi_4(\{1,2,4,6\}=0.99)$ (this small advantage from splitting increases as SU~1 moves further). As SU~1 distances itself further from the PU, its utility and that of its partner SU~6 decrease. Subsequently, as SU~1 moves $1.4$~km it finds it beneficial to split from $\{1,6\}$ and merge with SU~7. Through this merge, SU~1 and SU~7 improve their utilities. Meanwhile, SU~6 rejoins SUs $\{2,4\}$ forming a 3-SU coalition $\{2,4,6\}$ while increasing the utilities of all three SUs. In a nutshell, this figure illustrates how adaptive coalition formation through merge and split operates in a mobile cognitive radio network. Similar results can be seen whenever all SUs are mobile or even the PU is mobile but they are omitted for space limitation.\\
\begin{figure}[t]
\begin{center}
\includegraphics[angle=0,width=\columnwidth]{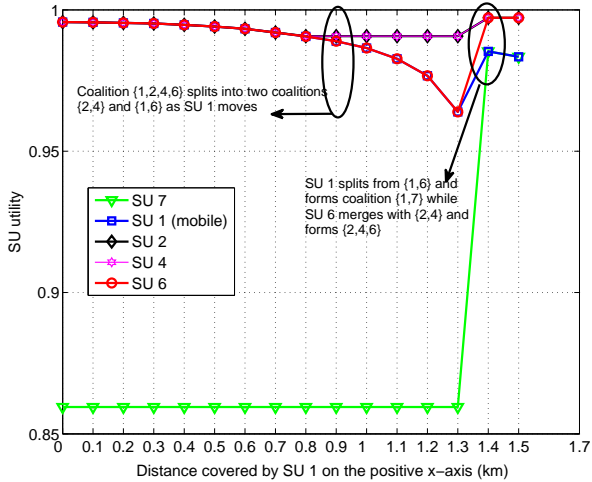}
\end{center}\vspace{-0.5cm}
\caption {Self-adaptation of the network's topology to mobility through merge-and-split as SU~1 moves horizontally on the positive x-axis.} \label{fig:mob}\vspace{-0.6cm}
\end{figure}
In Fig.~\ref{fig:numus}, for a network of $N=30$ SUs, we evaluate the sizes of the coalitions resulting from our distributed algorithm and compare them with the the upper bound $M_{\max}$ derived in Theorem~1. First and foremost, as the non-cooperative $P_f$ increases, both the maximum and the average size of the formed coalitions decrease converging towards the non-cooperative case as $P_f$ reaches the constraint $\alpha = 0.1$. Through this result, we can clearly see the limitations that the detection-false alarm probabilities trade off for collaborative sensing imposes on the coalition size and network topology. Also, in Fig.~\ref{fig:numus}, we show that, albeit the upper bound on coalition size $M_{\max}$ increases drastically as $P_f$ becomes smaller, the average maximum coalition size achieved by the proposed algorithm does not exceed $4$ SUs per coalition for the given network with $N=30$. This result shows that, in general, the network topology is composed of a large number of small coalitions rather than a small number of large coalitions, even when $P_f$ is small and the collaboration possibilities are high.\vspace{-0.1cm}

\section{Conclusions}\label{sec:conc}
In this paper, we proposed a novel distributed algorithm for collaborative spectrum sensing in cognitive radio networks. We modeled the collaborative sensing problem as a coalitional game with non-transferable utility and we derived a distributed algorithm for coalition formation. The proposed coalition formation algorithm is based on two simple rules of merge-and-split that enable SUs in a cognitive network to cooperate for improving their detection probability while taking into account the cost in terms of false alarm probability. We characterized the network structure resulting from the proposed algorithm, studied its stability and showed that a maximum number of SUs per coalition exists for the proposed utility model. Simulation results showed that the proposed distributed algorithm reduces the average missing probability per SU up to $86.6\%$ compared to the non-cooperative case. The results also showed how, through the proposed algorithm, the SUs can autonomously adapt the network structure to environmental changes such as mobility. Through simulations, we also compared the performance of the proposed algorithm with that of an optimal centralized solution.
\begin{figure}[!t]
\begin{center}
\includegraphics[angle=0,width=\columnwidth]{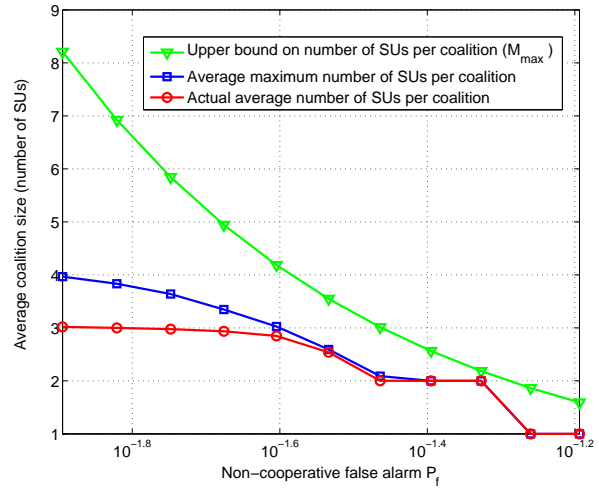}
\end{center}\vspace{-0.5cm}
\caption {Maximum and average coalition size vs. non-cooperative false alarm $P_f$ (or energy detection threshold $\lambda$) for the distributed solution for $N=30$~SUs.} \label{fig:numus}\vspace{-0.7cm}
\end{figure}
\vspace*{-0.6em}
\def\baselinestretch{0.95}
\bibliographystyle{IEEEtran}
\bibliography{references}

\end{document}